\newcommand{\T}{\texttt{T}}
\newcommand{\Sgate}{\texttt{S}}
\newcommand{\X}{\texttt{X}}
\newcommand{\Z}{\texttt{Z}}
\newcommand{\CS}{\texttt{CS}}
\newcommand{\CCZ}{\texttt{CCZ}}
\newcommand{\Hgate}{\texttt{H}}
\newcommand{\CNOT}{\texttt{CNOT}}
\newtheorem{theorem}{Theorem}[section]
\newtheorem{definition}{Definition}[section]
\newtheorem{corollary}{Corollary}[theorem]
\newtheorem{remark}{Remark}[theorem]
\newtheorem{lemma}[theorem]{Lemma}
\newtheorem{conjecture}[theorem]{Conjecture}
\def\BibTeX{{\rm B\kern-.05em{\sc i\kern-.025em b}\kern-.08em
    T\kern-.1667em\lower.7ex\hbox{E}\kern-.125emX}}
\begin{document}
\title{Transversal Clifford and $T$-gate codes of short length and high distance}
\author{Shubham P. Jain\IEEEmembership{} and Victor V. Albert \IEEEmembership{}

\thanks{S. P. Jain is with the Joint Center for Quantum Information and Computer Science at the University of Maryland, College Park, MD 20740, USA. (e-mail: sjn@umd.edu).}
\thanks{V. V. Albert is with the Joint Center for Quantum Information and Computer Science at the University of Maryland, College Park, MD 20740, USA and the National Institute of Standards and Technology, Gaithersburg, MD 20899, USA. (e-mail: vva@umd.edu).}}

\maketitle

\begin{abstract}

The non-local interactions in several quantum device architectures allow for the realization of more compact quantum encodings while retaining the same degree of protection against noise. Anticipating that short to medium-length codes will soon be realizable, it is important to construct stabilizer codes that, for a given code distance, admit fault-tolerant implementations of logical gates with the fewest number of physical qubits. 
To this aim, we construct three kinds of codes encoding a single logical qubit for distances up to $31$. First, we construct the smallest known doubly even codes, all of which admit a transversal implementation of the Clifford group. Applying a doubling procedure  [\href{https://arxiv.org/abs/1509.03239}{arXiv:1509.03239}] to such codes yields the smallest known weak triply even codes for the same distances and number of encoded qubits. This second family of codes admit a transversal implementation of the logical $\texttt{T}$-gate. Relaxing the triply even property, we obtain our third family of triorthogonal codes with an even lower overhead at the cost of requiring additional Clifford gates to achieve the same logical operation. 
To our knowledge, these are the smallest known triorthogonal codes for their respective distances. While not qLDPC, the stabilizer generator weights of the code families with transversal \T-gates scale roughly as the square root of their lengths.

\end{abstract}

\begin{IEEEkeywords}
cyclic codes, error correction, fault tolerance,  magic state distillation, quantum computation, universal logical gates
\end{IEEEkeywords}

\begin{table}[t]
\caption{Self-dual based triorthogonal codes}
\label{tab:triortho}
\centering
\begin{tabular}{ccc}
\toprule 
self dual  & self dual CSS  & triorthogonal \tabularnewline
\midrule 
$[8,4,4]$ & $[[7,1,3]]$~\cite{steane_multiple_1996}  & $[[15,1,3]]$~\cite{bravyi_doubled_2015} \tabularnewline
\midrule 
 $[18,9,4]$ & $[[17,1,5]]$~\cite{bombin_topological_2006}  & $[[49,1,5]]$~\cite{bravyi_doubled_2015}\tabularnewline
\midrule 
$[24,12,8]$ & $[[23,1,7]]$~\cite{steane_simple_1996}  & $[[95,1,7]]$~\cite{sullivan_code_2024}\tabularnewline
\midrule 
$[46,23,10]$~\cite{harada_extremal_2000} & $[[45,1,9]]$  & $[[185,1,9]]$ \tabularnewline
\midrule 
$[48,24,12]$ & $[[47,1,11]]$~\cite{cross_comparative_2009}  & $[[279,1,11]]$ \tabularnewline
\midrule 
$[70,35,14]$~\cite{gulliver_existence_1998} & $[[69,1,13]]$  & $[[417,1,13]]$ \tabularnewline
\midrule 
{$[80,40,16]$} & {$[[79,1,15]]$} & $[[575,1,15]]$\tabularnewline \midrule
{$[102,51,18]$}~\cite{g_philippe_tables_2007}& {$[[101,1,17]]$} & $[[777,1,17]]$\tabularnewline \midrule
 {$[104,52,20]$}& {$[[103,1,19]]$} & $[[983,1,19]]$\tabularnewline
\cmidrule{1-3}
\multirow{2}{*}{$[168,84,24]$} & \multirow{2}{*}{$[[167,1,23]]$} & $[[1317,1,21]]$\tabularnewline
 &  & $[[1651,1,23]]$\tabularnewline
\cmidrule{1-3}
\multirow{2}{*}{$[192,96,28]$} & \multirow{2}{*}{$[[191,1,27]]$} & $[[2033,1,25]]$\tabularnewline
 &  & $[[2415,1,27]]$\tabularnewline
\cmidrule{1-3}
\multirow{2}{*}{$[200,100,32]$} & \multirow{2}{*}{$[[199,1,31]]$} & $[[2813,1,29]]$\tabularnewline
 &  & $[[3211,1,31]]$\tabularnewline\bottomrule
 \\
 \multicolumn{3}{p{251pt}}{Using the best known self-dual classical codes (column 1), we construct self-dual CSS codes (column 2) with strongly transversal logical \X-gates. Doubling these $[[n,1,d]]$ self-dual CSS codes with their corresponding $[[n,1,d-2]]$ triorthogonal codes results in the shortest known $[[n,1,d]]$ triorthogonal codes for their distance (column 3). These triorthogonal codes admit the logical \T-gate via single-qubit \T-gates applied on each physical qubit, followed by some $\texttt{S}$ and $\texttt{CZ}$ gates. The stabilizer generator weights of the self dual CSS and triorthogonal codes scale roughly as $O(n)$ and $O(\sqrt{n})$, respectively.}

\end{tabular}
\end{table}

\begin{table}[t]
\caption{quadratic-residue based weak triply even codes}
\label{tab:triply_even_from_QR}
\centering
\begin{tabular}{ccc}
\toprule 
extended QR  & doubly even  & triply even*\tabularnewline
\midrule 
$[8,4,4]$ & $[[7,1,3]]$~\cite{steane_multiple_1996}  & $[[15,1,3]]$~\cite{bravyi_doubled_2015} \tabularnewline
\midrule 
  & $[[17,1,5]]$~\cite{bombin_topological_2006}  & $[[49,1,5]]$~\cite{bravyi_doubled_2015}\tabularnewline
\midrule 
$[24,12,8]$ & $[[23,1,7]]$~\cite{steane_simple_1996}  & $[[95,1,7]]$~\cite{sullivan_code_2024}\tabularnewline
\midrule 
\multirow{2}{*}{$[48,24,12]$} & \multirow{2}{*}{$[[47,1,11]]$~\cite{cross_comparative_2009}} & $[[189,1,9]]$\tabularnewline
 &  & $[[283,1,11]]$\tabularnewline
\cmidrule{1-3}
\multirow{2}{*}{$[80,40,16]$} & \multirow{2}{*}{$[[79,1,15]]$} & $[[441,1,13]]$\tabularnewline
 &  & $[[599,1,15]]$\tabularnewline
\cmidrule{1-3}
\multirow{2}{*}{$[104,52,20]$}& \multirow{2}{*}{$[[103,1,19]]$} & $[[805,1,17]]$\tabularnewline
 &  & $[[1011,1,19]]$\tabularnewline
\cmidrule{1-3}
\multirow{2}{*}{$[168,84,24]$} & \multirow{2}{*}{$[[167,1,23]]$} & $[[1345,1,21]]$\tabularnewline
 &  & $[[1679,1,23]]$\tabularnewline
\cmidrule{1-3}
\multirow{2}{*}{$[192,96,28]$} & \multirow{2}{*}{$[[191,1,27]]$} & $[[2061,1,25]]$\tabularnewline
 &  & $[[2443,1,27]]$\tabularnewline
\cmidrule{1-3}
\multirow{2}{*}{$[200,100,32]$} & \multirow{2}{*}{$[[199,1,31]]$} & $[[2841,1,29]]$\tabularnewline
 &  & $[[3239,1,31]]$\tabularnewline
\bottomrule
\\
\multicolumn{3}{p{251pt}}{Using the fact that the $[[7,1,3]]$ Steane and $[[23,1,7]]$ quantum Golay codes stem from classical quadratic-residue (QR) codes, we use other QR codes (column 1) to identify longer doubly even CSS codes (column 2), each of which admits transversal logical Clifford gates.
We then obtain the best known $[[n,1,d]]$ weak triply even codes (column 3) for distances \(9\leq d\leq 31\), each of which admits the logical \T-gate with a  transversal action of the physical \texttt{T}-gates, without any Clifford corrections. 
Each weak triply even code is obtained from its corresponding doubly even code and the previous weak triply even code via doubling~\cite{betsumiya_triply_2012,bravyi_doubled_2015,haah_towers_2018}. The stabilizer generator weights of the doubly even and the triply even* codes scale roughly as $O(n)$ and $O(\sqrt{n})$, respectively.} 
\end{tabular}

\end{table}

Quantum error correction (QEC) is an integral part of realizing reliable quantum computation. Any viable QEC solution needs to admit a fault-tolerant set of universal gates in order to suppress the noise buildup over large computations. This is usually achieved by choosing codes with fault-tolerant implementations of a non-universal ``easy'' set of gates in the Clifford group~\cite{bombin_topological_2006,gottesman_quantum_2016,katzgraber_topological_2010,yoder_surface_2017,koutsioumpas_smallest_2022,vasmer_morphing_2022}, together with a ``hard'' gate (necessary for universality) implemented by magic-state distillation~\cite{bravyi_universal_2005,campbell_roads_2017}. 

A promising alternative to resource-intensive magic-state distillation is to design codes that can implement a logical non-Clifford gate in a naturally fault-tolerant fashion.
Such codes can then be used in conjunction with Clifford-group codes to achieve universal computation via various protocols such as gauge fixing or code-switching~\cite{gong_computation_2024, heusen_efficient_2024,paetznick_universal_2013, bombin_gauge_2015, yoder_universal_2016, sullivan_code_2024,anderson_fault-tolerant_2014, jochym-oconnor_fault-tolerant_2019}.

A natural path to such fault tolerant implementations of the desired non-Clifford gate is via transversality, i.e., via a tensor product of unitary operations acting on each physical qubit.
That way, any errors occurring during such a gate cannot spread to neighboring qubits.

A popular choice of non-Clifford gate is the $\texttt{T}$-gate, $\texttt{T}=|0\rangle\langle 0|+e^{i \pi/4}|1\rangle\langle 1|$. 
Quantum codes preserved under the transversal action of $\texttt{T}$ (i.e., an action of a power of the $\texttt{T}$-gate on each physical qubit) hold much promise for universal fault-tolerant computation.
Non-stabilizer codes can admit this property at very short length (i.e., number of qubits) \(n\)~\cite{kubischta_not-so-secret_2024}, but correcting errors for such codes is not straightforward.
On the other hand, \eczoo[Qubit stabilizer codes]{qubit_stabilizer}~\cite{gottesman_stabilizer_1997,calderbank_quantum_1997} come with established error-correction protocols, and several families of \T-gate-supporting stabilizer codes are under active development. We use the standard notation $[[n,k,d]]$ to denote a stabilizer code with $n$ physical qubits, $k$ encoded qubits and distance $d$.

Most \T-gate-supporting codes have low distance given a number of qubits \(n\) relative to codes which do not admit transversal non-Clifford gates~\cite{grassl_markus_bounds_2007}.
While many of these codes benefit from having geometrically local structure, recent advances in ion-trap~\cite{monroe_scaling_2013}, photon~\cite{rudolph_why_2017} and neutral-atom~\cite{bluvstein_logical_2024} architectures allow one to relax the locality requirement and consider codes that have non-local stabilizers, but that require a lower number of physical qubits \(n\) to realize the same distance. 

Motivated by finding such codes, we allow ourselves to look at codes with potentially non-local and high-degree stabilizers to optimize the physical qubit overhead required to realize any given distance. We focus on codes encoding a single logical qubit (i.e. $k=1$) and construct three types of qubit stabilizer codes, first of which realize the logical Clifford group transversally and the latter two admit transversal implementations of the logical \T-gate. These codes, to the best of our knowledge, are the shortest to realize their respective transversal gate sets for their corresponding distances. We present below a brief overview of these families and our main results.
\begin{enumerate}
    \item \textit{Transversal Clifford gates}: Leveraging the well-established class of (classical) quadratic-residue (QR) codes, we identify a family of \([[n,1,d]]\) QR CSS codes of distance up to \(31\) that are all doubly even (see Section~\ref{subsec:divisible-codes} for definition). These are the shortest qubit stabilizer codes with their respective distances to realize the full Clifford group transversally (see Table~\ref{tab:triply_even_from_QR} for complete list).
    We outline how to extend these codes to an infinite family with growing distance.

    \item \textit{Transversal \T-gate}: Plugging these codes into a doubling procedure ~\cite{betsumiya_triply_2012,bravyi_doubled_2015,haah_towers_2018}, we construct a family of weak triply even codes (see Section~\ref{subsec:divisible-codes} for definition) which showcase growing distance with the lowest qubit overhead of any known examples.
    To our knowledge, these are the shortest qubit stabilizer codes with such distance to realize a \(\texttt T\)-gate transversally, without any Clifford corrections (see Table~\ref{tab:triply_even_from_QR} for complete list).
    
    \item \textit{Transversal \T-gate (up to Clifford corrections)}:
    Mapping the best known classical self-dual codes to self-dual CSS codes and then doubling them, we obtain a new family of triorthogonal codes (see Section~\ref{subsec:triortho} for definition). This family represents, to the best of our knowledge, the shortest triorthogonal codes~\cite{bravyi_magic-state_2012,haah_towers_2018,nezami_classification_2022} for their given distances (see Table~\ref{tab:triortho} for complete list). These codes implement the \T-gate on the encoded qubit via a transversal implementation, followed by some Clifford gates on the physical qubits.

\end{enumerate}

Computational limitations in computing high code distances prevent us from making exact statements about the parameters of the larger members of our code families. Nevertheless, we are able to prove the existence of infinite families of all three kinds of the aforementioned codes for increasing distance, provide bounds on their parameters, and describe how to obtain them.

Leaving the details for Section~\ref{sec:overview}, we introduce here some relevant code properties and their implications. Doubly even (DE) and weak doubly even (DE*) codes with single logical qubits allow certain transversal implementations of the Clifford gates to preserve the codespace. Analogously, triply even (TE) and weak triply even (TE*) codes with single logical qubits allow certain transversal implementations of the non-Clifford \T-gate to preserve the codespace.

The doubling mapping takes a triorthogonal code and a self-dual code to yield a triorthogonal code with potentially higher distance. 
In terms of quantum code notation, this mapping takes a self-dual $[[n_1,1,d_1]]$ and a triorthogonal $[[n_2,1,d_2]]$ code to a triorthogonal $[[2n_1+n_2,1,\min (d_1,d_2+2)]]$ code. In certain special cases, such as the doubled color codes~\cite{bravyi_doubled_2015}, the same mapping can take a DE* code (instead of a self-dual code) and a TE* code (instead of a triorthogonal code) to output a TE* code,   effectively ``doubling'' the divisibility of the input DE* code. 

The family of TE* doubled color codes was previously constructed using the DE \eczoohref[color codes]{2d_color}~\cite{bombin_exact_2007,bombin_topological_2006} by repeatedly applying the doubling mapping to the shortest distance-three TE code~\cite{bravyi_doubled_2015}. Doubling has also been used to construct a $[[95,1,7]]$ TE code using the $[[49,1,5]]$ TE and the $[[23,1,7]]$ DE \eczoohref[quantum Golay code]{qubit_golay}~\cite{sullivan_code_2024}. 
The 49-qubit code is the shortest possible distance-five TE code~\cite{bravyi_magic-state_2012}, while the 95-qubit code currently holds the record for the shortest distance-seven code.

Our key observation is that the \eczoohref[classical Golay code]{golay}~\cite{marcel_je_golay_notes_1949} is a DE \eczoohref[quadratic residue (QR) code]{binary_quad_residue}~\cite{macwilliams_theory_1977}.
We use the classical QR family to identify DE quantum QR codes via the \eczoohref[CSS construction]{qubit_css}~\cite{calderbank_good_1996,steane_error_1996,steane_multiple_1996}, each of which realizes the single-qubit Clifford group transversally. Some of these codes have been noted before as part of the quantum QR family~\cite{lai_construction_2011}, and some codes with the same parameters exist in the Bose–Chaudhuri–Hocquenghem (BCH) family~\cite{grassl_quantum_1999}.
Their fault-tolerant gates have not been studied however, to our knowledge, with the exception of Ref.~\cite{cross_comparative_2009} for the case of \(n\leq79\). Doubling these codes then yields the new TE* family. 

We also identify some of the smallest self-dual classical codes which can be mapped to self-dual CSS codes. We use the same doubling mapping to obtain the best known examples of triorthogonal codes, which are shorter than the constructed TE* family. This improvement is achieved at the cost of extra $\texttt{S}$ and $\texttt{CZ}$ gates required by the triorthogonal family to achieve the same logical action~\cite{bravyi_magic-state_2012}. Hence, the TE* family is potentially more resource efficient as it admits truly transversal implementations of the logical $\texttt{T}$-gate. 

A recent work~\cite{shi_triorthogonal_2024} presents an alternative algorithm to construct triorthogonal codes from self-dual classical codes, presenting codes with distance $d \leq 3$. 
Complementing this work, we focus on self-dual codes that are also DE and use the combination of the CSS construction and the doubling map, which turns out to boost the code distance substantially. Since QR codes are cyclic, we also solve the open problem posed in Ref.~\cite{shi_triorthogonal_2024} of constructing an infinite family of triorthogonal codes using cyclic codes.

We begin by presenting an overview of the different code families relevant to transversality of \T-gates, describing their properties, how they relate to each other and clarifying some common ambiguities related to them in Section~\ref{sec:overview}. We then describe the doubling map and focus on using it to generate weak triply even codes in Section~\ref{sec:doubling}. Section~\ref{sec:qr_codes} constructs doubly even and weak triply even codes based on the quadratic-residue code family (Table~\ref{tab:triply_even_from_QR}). Similar methodology is also used to  construct the best known examples of triorthogonal codes (Table~\ref{tab:triortho}) using self-dual classical codes. We conclude in Section~\ref{sec:conclusion}.

\section{Classes of \T-gate codes}\label{sec:overview}
Various classes of CSS codes have been defined which can allow a \textit{transversal \T-gate} (i.e. $\texttt{T}^j$ acting on each qubit for some $j\in \mathbb{Z}$) or \textit{strongly transversal \T-gate} (all \(j=1\)) on the physical qubits to implement either logical identity or logical $\texttt{T}$-gate(s) on it, possibly with some Clifford corrections. We review those classes and their related concepts here with a focus on describing how they relate to each other (see Figure~\ref{fig:code_families} and Table~\ref{tab:classes_summary}). 

An $[[n,k,d]]$ CSS code is denoted as $CSS(X,C_2;Z,C_1^\perp)$ where $C_2$, $C_1$ are the classical codes that generate the groups of \X-stabilizers, \X-logical operators, respectively, and $C_1^\perp$ is the set of all binary strings orthogonal to \(C_{1}\) under the standard binary inner product~\cite{macwilliams_theory_1977}. We separate the non-trivial logical gate generators $C_{1,(1)}$ from the stabilizers $C_{2}$ as 
\begin{equation}
    C_1 = \left[\begin{array}{c}
   C_{1,(1)}  \\
   \midrule
   C_{2}\\
\end{array}\right] .
\end{equation}
In order to exclude codes with $k=0$, we impose the condition that $C_2 \subsetneq C_1$ implying that $C_{1,(1)}$ is non-empty. With every CSS code, there is an associated character vector $y\in\mathbb{F}_2^{n}$ which determines the sign of every \Z-stabilizer \Z$(z)$ via the equation $\epsilon_z = (-1)^{zy^T}$ for $z\in C_1^\perp$~\cite{hu_designing_2022}.

\begin{table*}[t]
  \centering

    \caption{A summary of the \T-gate code families }
    \label{tab:classes_summary}
\begin{tabular}{l | c c c c c c }
    \toprule % <-- Toprule here
\makecell[c]{\textbf{transversal}\\  \textbf{gate}}  & \textbf{CSS-T}    & \textbf{TE}    & \textbf{TE*}    
                & \textbf{triorthogonal}    & \makecell[c]{\textbf{generalized}\\  \textbf{triorthogonal}}    &    $\mathbf{y=\bar{0}}$           \\
    \midrule % <-- Midrule here
${\T}^{\otimes n}$ preserves codespace   & \checkmark  &  {}  & {}  & {}  & {}  & {} \\
${\T}^{\otimes n}\sim \bar{\texttt{I}}^{\otimes k}$    & \checkmark  &  {}  & {}  & $\times$  & {}  & {} \\
${\T}^{\otimes n}$ preserves codespace   & \checkmark  &  $\medbullet$  & $\medbullet$  & {}  & {}  & \checkmark \\
${\T}^{\otimes n}\sim\bar{\texttt{T}^m}^{\otimes k}$ with $m$ odd    & \checkmark  &  $\medbullet$  & $\medbullet$  & \checkmark  & {}  & $\medbullet$ \\
${\T}^{\otimes n}\sim\bar{\texttt{T}^m}^{\otimes k}$ (known for $k=1$)  & $\medbullet$  &  \checkmark  & $\medbullet$  & {}  & {}  & \checkmark \\
${\T}_{all}\sim\bar{\texttt{T}^m}^{\otimes k}$ (known for $k=1$)   & $\medbullet$  &  {}  & \checkmark  & {}  & {}  & \checkmark \\
${\T}^{\otimes n}\sim\bar{\texttt{T}}^{\otimes k}$ up to Clifford corrections   & {} & {}  & {}  & \checkmark  & $\medbullet$  & \checkmark \\
${\T}^{\otimes n}$ induces logical \T, \CS, \CCZ$\textrm{ }$up to Clifford corrections   & {} & {}  & {}  & $\medbullet$  & \checkmark  & \checkmark \\
    \bottomrule  % <-- Bottomrule here
\end{tabular} 
\begin{tabular}{p{502pt}} 
\\
Column 1 describes the relevant transversal gate and its logical action on the $[[n,k,d]]$ CSS code satisfying the checkmarked (\checkmark) code properties. The solid dots ($\medbullet$) indicate code properties implied by the checkmarked code families admitting corresponding transversal action of the first column. The cross ($\times$) indicates code properties inconsistent with the checkmarked code families and transversal action of the same row. $A\sim\bar{B}$ denotes the action of $B$ on the encoded space induced by the action of $A$ on the physical qubits, $\T_{all}$ is the partitioned transversal \T-gate defined in Lemma~\ref{lemma:TE*gate} and $y=\bar{0}$ denotes all the \Z-type stabilizers being positively signed.
\end{tabular}

\end{table*}

\begin{figure}[t]
  \centering
  \includegraphics[width=0.5\textwidth]{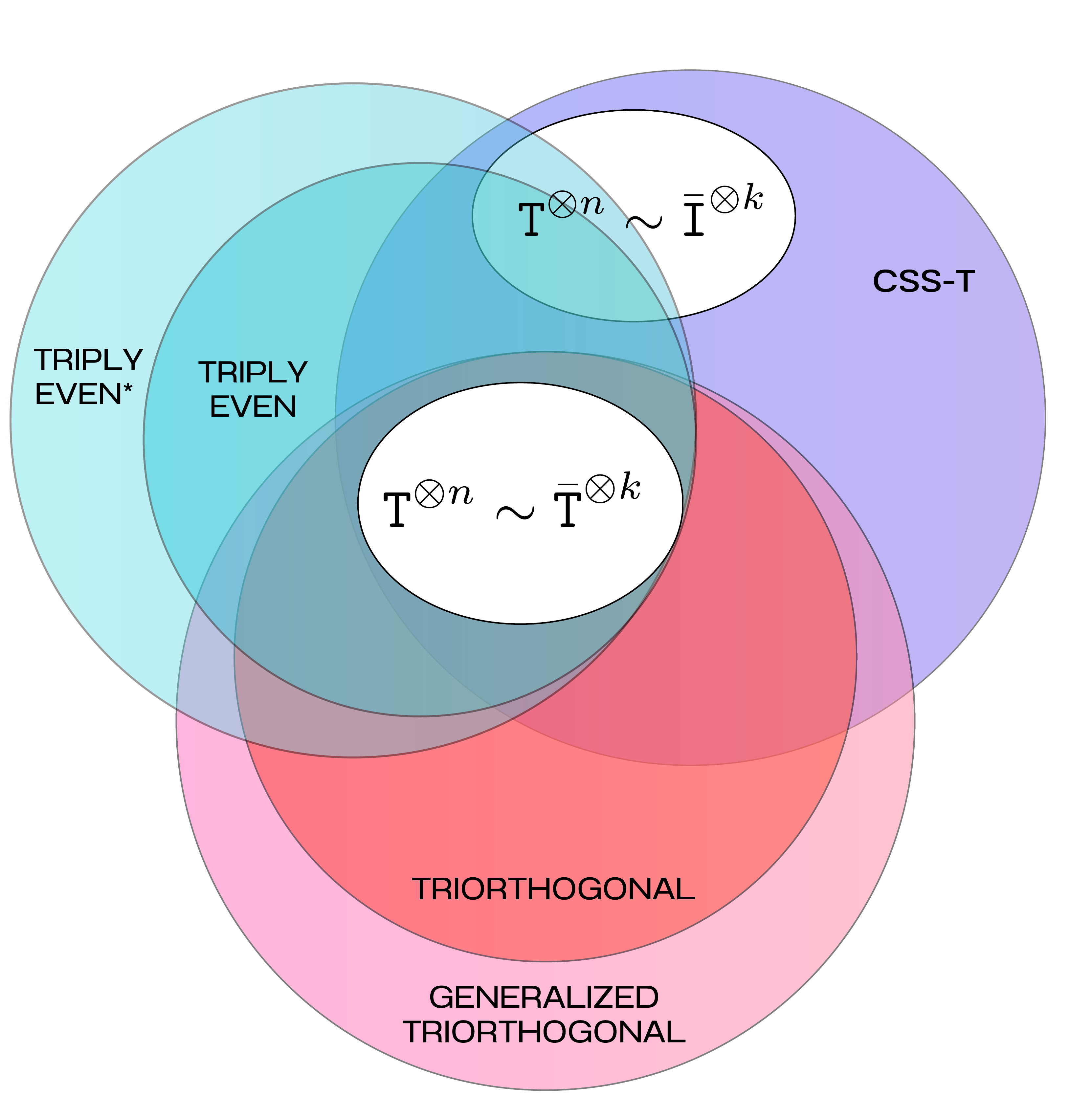}
  \caption{Relationship between various $\texttt{T}$-gate code families. The white bubbles represent codes where the strongly transversal \T-gate implements the logical \T-gate ($\bar{\T}$) or the logical identity ($\bar{\texttt{I}}$) on each encoded qubit. 
  }
  \label{fig:code_families}
\end{figure}

\subsection{CSS-T codes}\label{subsec:css-t}
First introduced by Rengaswamy et al.~\cite{rengaswamy_optimality_2020,rengaswamy_classical_2020}, \textit{\eczoohref[CSS-T codes]{css-t}} are the most general family of CSS codes for which a transversal implementation of the \T-gate on the physical qubits preserves the codespace by performing e.g., a logical $\texttt{T}$-gate~\cite{campbell_roads_2017,rengaswamy_optimality_2020,koutsioumpas_smallest_2022,sullivan_code_2024,haah_towers_2018,bravyi_magic-state_2012,vasmer_morphing_2022, rengaswamy_classical_2020,tansuwannont_achieving_2022} or the logical identity~\cite{rengaswamy_optimality_2020,andrade_css-t_2023,camps-moreno_algebraic_2024,berardini_structure_2024,rengaswamy_classical_2020} on the encoded information.

We define the CSS-T family to be most general class of CSS codes for which a strongly transversal implementation of the \T-gate on the physical qubits preserves the codespace. This modified definition is adopted to be consistent with the many recent works~\cite{camps-moreno_algebraic_2024,andrade_css-t_2023,berardini_structure_2024,berardini_asymptotically_2024,camps-moreno_toward_2024,camps-moreno_binary_2024}. A theorem by Hu et al.~\cite{hu_designing_2022} provides a simple description of the CSS-T class:
\begin{theorem}\label{thm:hu_css-t}
   A strongly transversal \T-gate (i.e. \T-gate acting on every physical qubit) preserves the codespace $CSS(X,C_2;Z,C_1^\perp)$ with the character vector $y$ if and only if
    \begin{equation}
        |x|-2|x \star z| = 0\mod{8} \quad \forall\,\,x\in C_2,z\in C_1+y,
    \end{equation}
    where $|x|$ is the Hamming weight of $x$ and $x\star z$ is the element-wise product of $x$ and $z$.
\end{theorem}
The following corollary further simplifies this description for the most widely considered case of all positively signed \Z-stabilizers.
\begin{corollary}\label{cor:hu_css-t}
    The strongly transversal \T-gate preserves the codespace $CSS(X,C_2;Z,C_1^\perp)$ with all positively signed \Z-stabilizers (i.e., character vector $y=0$) if and only if
    \begin{eqnarray}
        |x| &=& 0 \mod{8} \quad \forall\,\, x\in C_2, \\
        |x\star z| &=& 0 \mod{4} \quad \forall\,\, x\in C_2, z\in C_1.
    \end{eqnarray}
\end{corollary}
The CSS-T codes can be understood as a composition of two components: a) a \textit{CSS-T pair} of classical codes $(C_1,C_2)$, and b) the character vector determining the signs of the \Z-stabilizers. Camps-Moreno et al.~\cite{camps-moreno_algebraic_2024} study the algebraic structure of the required classical codes and provide simplified conditions for $(C_1,C_2)$ to be a CSS-T pair of classical codes, paraphrased as the following theorem.
\begin{theorem}\label{thm:css-t_conditions}
    Let $C_1$ and $C_2$ be two classical codes. Then, the following statements are equivalent:
    \begin{enumerate}
        \item $(C_1, C_2)$ form a CSS-T pair.
        \item $C_2 \subset C_1$, $C_2$ is even-weighted, and for any $x \in C_2$, the code $C_1^{supp(x)}$ is self-orthogonal.
        \item $C_2\subset C_1 \cap (C_1^{\star 2})^\perp$.
        \item $C_1^\perp + C_1^{\star 2} \subset C_2^\perp$
    \end{enumerate}
    where $C_1^{\star 2} = \{x \star y \, | \, x,y\in C_1\}$,  $C_1^{supp(x)} = \{c\in C_1 \, | \, supp(c)\subseteq supp(x) \}$ and $supp(x) = \{i \,|\, x_i \neq 0\}$.
\end{theorem}
Intuitively, the set $C_1^{supp(x)}$ consists of all the vectors in $C_1$ which are non-zero only in the positions where $x$ is non-zero.

\begin{remark}
    The CSS-T pair conditions of Theorem~\ref{thm:css-t_conditions} are necessary but not sufficient to define a CSS-T code.
\end{remark}

We emphasize that Theorem~\ref{thm:css-t_conditions} only deals with one component of defining a CSS-T code, namely, the CSS-T pair of classical codes. This component is necessary but also requires a compatible character vector in order to define a CSS-T code. We found that all the examples of CSS-T pairs we looked at admit some character vector which satisfies the conditions of Theorem~\ref{thm:hu_css-t} but whether such a character vector exists for every CSS-T pair i.e. whether every CSS-T pair of classical codes can be used to define a CSS-T quantum code is still an open question. Thus, the conditions of Camps-Moreno et al. (summarized in Theorem~\ref{thm:css-t_conditions}) should be viewed as a set of necessary, but not sufficient conditions for the underlying classical codes to define a CSS-T code.

\subsection{Triorthogonal codes}\label{subsec:triortho}
\eczoohref[\textit{Triorthogonal codes}]{quantum_triorthogonal}, defined by Bravyi and Haah~\cite{bravyi_magic-state_2012}, are $CSS(X,C_2;Z,C_1^\perp)$ codes for which the generating matrix of the code $C_1$ is triorthogonal according to the definition:
\begin{definition}\label{def:triortho}
    A binary matrix $M$ of dimension $m \times n$ is called triorthogonal if and only if 
    \begin{eqnarray}
       |M_a\star M_b| = 0 \mod{2} \quad \forall \,\, 1\leq a < b \leq m ,\\
       |M_a\star M_b \star M_c| = 0 \mod{2} \quad \forall \,\, 1\leq a < b <c \leq m .
    \end{eqnarray}
\end{definition}

 Triorthogonal codes also demand that the rows of $C_{1,(1)}$ and $C_{2}$ be odd and even weighted, respectively. Such a CSS code admits the logical \T-gate via a strongly transversal implementation, supplemented by additional \texttt{S} and \texttt{CZ} Clifford gates. We note that the original definition of triorthogonal codes include the case where $C_2=C_1$, while we assume $C_2\subsetneq C_1$. In the former case, the logical action of any physical gate could at most be the identity, so the analysis extends trivially.

\begin{remark}
    Not all triorthogonal codes are CSS-T codes.
\end{remark}

Triorthogonal codes are defined with each \Z-stabilizer positively signed, and must satisfy the conditions of Corollary~\ref{cor:hu_css-t} in order to be CSS-T. In particular, a triorthogonal code must also be triply even and have a doubly even overlap (i.e. weight of the overlap divisible by $4$) between all \X-type stabilizers and logical operators for the strongly transversal \T-gate to preserve the codespace. For triorthogonal codes which do not satisfy these conditions, the strongly transversal \T-gate takes the logical states out of the codespace, and are brought back (along with the logical \T-action) only after the application of the additional Clifford gates as prescribed in Ref.~\cite{bravyi_magic-state_2012}. 

We demonstrate this with the example of the $[[3,1,1]]$ $CSS(X,C_2;Z,C_1^\perp)$ code defined by the triorthogonal matrix
\begin{equation}
    C_1 = \left[\begin{array}{ccc}
   1 & 1 & 1  \\
   \midrule
   1 & 1 & 0\\
\end{array}\right] .
\end{equation} 
The logical code states can be written as
\begin{eqnarray}
    |\bar{0}\rangle &=& \frac{1}{\sqrt{2}}\left(|000\rangle+|110\rangle\right), \\
    |\bar{1}\rangle &=& \frac{1}{\sqrt{2}}\left(|111\rangle+|001\rangle\right).
\end{eqnarray}
We can explicitly check that the action of the strongly transversal \T-gate takes these states out of the codespace:
\begin{eqnarray}
    \texttt{T}^{\otimes3}|\bar{0}\rangle &=& \frac{1}{\sqrt{2}}\left(|000\rangle+i|110\rangle\right),\label{eqn:T30example} \\
    \texttt{T}^{\otimes3}|\bar{1}\rangle &=& \frac{1}{\sqrt{2}}e^{i\pi/4}\left(i|111\rangle+|001\rangle\right). \label{eqn:T31example}
\end{eqnarray}
It is clear that this codespace is not preserved under the action of the strongly transversal \T-gate and hence, the code is not CSS-T. However, an additional application of $\texttt{S}^\dagger$ on the second qubit gets rid of the additional $i$ factors in Eqns.~\ref{eqn:T30example} and \ref{eqn:T31example} and leads to the implementation of the logical \T-gate on the codespace. This demonstrates the need of Clifford corrections for the implementation of the logical \T-gate in triorthogonal codes. A more non-trivial example is the $[[185,1,9]]$ triorthogonal code in Table~\ref{tab:triortho} which admits the weight $140$ operator $\texttt{I}^{\otimes 45}\texttt{X}^{\otimes 140}$ as an \X-stabilizer. Since $8$-divisibility is necessary for a triorthogonal code to be CSS-T and $140 \neq 0\mod{8}$, we conclude that this code is neither triply even nor CSS-T.

\begin{remark}
    A strongly transversal \T-gate cannot induce the logical identity on the codespace for any triorthogonal code.
\end{remark}
A previous result from Rengaswamy et al.~\cite{rengaswamy_optimality_2020} states that for a $CSS(X,C_2;Z,C_1^\perp)$ code, where the strongly transversal \T-gate acts as the logical identity on the codespace, the matrix $C_1$ must be triorthogonal and all \X-logical operators be even-weighted. On the contrary, triorthogonal codes are defined with odd-weighted \X-type logical operators. Hence, these two classes of codes do not overlap (as shown in Figure~\ref{fig:code_families}). 

The same can also be inferred by analyzing the transversal action of the \T-gate on triorthogonal codes. A triorthogonal code always admits the logical \T-gate upon the action of the strongly transversal \T-gate, supplemented by some Clifford corrections. However, if it were to admit the logical identity upon the action of the strongly transversal \T-gate, then no Clifford gate could induce the logical \T-gate on the codespace. Hence, a triorthogonal code cannot admit the logical identity induced by the action of the strongly transversal \T-gate.

We also note here the existence of a superclass of triorthogonal codes called \textit{generalized triorthogonal} codes. These generalized codes implement the logical \T, \CS, and \CCZ-gates on different logical qubits via a strongly transversal implementation of the \T-gate, supplemented by Clifford corrections. Figure~\ref{fig:code_families} and Table~\ref{tab:classes_summary} describe how this class relates to the other classes. We refer the interested reader to the paper by Haah and Hastings~\cite{haah_codes_2018} for more details.

\subsection{Divisible codes}\label{subsec:divisible-codes}
\eczoohref[Divisible codes]{divisible} have long been studied in classical coding theory~\cite{macwilliams_theory_1977,  ward_divisible_1981, betsumiya_triply_2012,kurz_divisible_2023}.
They are related to CSS codes whose codespace is preserved under transversal phase gates and are defined as follows:
\begin{definition}
    A codespace $C$ is called $\Delta$-divisible if
    \begin{equation}
        |x|=0\mod{\Delta} \quad \forall \,\,x\in C.
    \end{equation}
\end{definition}
A $\Delta$-divisible CSS code is analogously defined by the Pauli weight of each \X-stabilizer being divisible by $\Delta$.
\begin{definition}
    A $CSS(X,C_2;Z,C_1^\perp)$ code is called $\Delta$-divisible if $C_2$ is a $\Delta$-divisible classical code.
\end{definition}
$4$-divisible and $8$-divisible codes are also termed doubly (DE) and triply even (TE) codes, respectively. 
Divisibility of CSS codes is closely related to the logical action induced by transversal phase gates~\cite{hu_divisible_2022,landahl_complex_2013,campbell_unified_2017,haah_codes_2018}. For example: $8$-divisibility is a necessary condition for a CSS code to admit strongly transversal logical \T-gates~\cite{rengaswamy_optimality_2020}. 

To understand the impact of the \X-stabilizer weights on the action of transversal phase gates, it is instructive to look at the example of the DE Steane $[[7,1,3]]$ code with $y=0$. Its $C_1$ matrix and codewords are described as
\begin{eqnarray}
    C_1 &=& \left[\begin{array}{ccccccc}
   1 & 1 & 1 & 1 & 1 & 1 & 1 \\
   \midrule 
   0 & 0 & 0 & 1 & 1 & 1 & 1  \\
   0 & 1 & 1 & 0 & 0 & 1 & 1\\
    1 & 0 & 1 & 0 & 1 & 0 & 1\\
\end{array}\right], \\
|\bar{0}\rangle &=& \sum_{v\in \textrm{span}(C_2)} |v\rangle, \\
    |\bar{1}\rangle &=& \sum_{v\in \textrm{span}(C_2)} |v+\bar{1}\rangle .
\end{eqnarray} 

The qubit configurations in the $|\bar{0}\rangle$ state are the codewords of the linear code $\textrm{span}(C_2)$ and those of the $|\bar{1}\rangle$ state are the same codewords added to the binary representation of the \X-logical operator $\bar{1}$. The action of the strongly transversal ${\texttt{S}}$-gate (where ${\texttt{S}}=|0\rangle\langle 0| + i|1\rangle\langle 1|$) is then described by 
\begin{eqnarray}
    \texttt{S}^{\otimes 7}|\bar{0}\rangle &=& \sum_{v\in \textrm{span}(C_2)} \exp\left( 
 i\frac{\pi}{4}|v|\right)|v\rangle, \\
   \texttt{S}^{\otimes 7} |\bar{1}\rangle &=& \sum_{v\in \textrm{span}(C_2)} \exp\left(
 i\frac{\pi}{4}|v+\bar{1}|\right)|v+\bar{1}\rangle
 .
\end{eqnarray}

Since the Steane code is a DE code,  $|v|=0\mod{4}\,\forall\,v\in \textrm{span}(C_2)$, resulting in a trivial action on the $|\bar{0}\rangle$ state. It is also not hard to check that $|v+\bar{1}|=3\mod{4}\,\forall\,v\in \textrm{span}(C_2)$, resulting in an overall phase of $-i$ on the $|\bar{1}\rangle$ state. Consolidating these two phases, we conclude that the strongly transversal $\texttt{S}$ gate on the Steane code results in the logical $\bar{\texttt{S}^\dagger}$ gate. Equivalently, the strongly transversal $\texttt{S}^\dagger$ gate translates to the logical $\bar{\texttt{S}}$ gate. 

In an analogous manner, TE codes have a direct impact on the action of the transversal \T-gates. More generally, the weights of the \X-stabilizers directly impact the transversal action of the phase gates for codes with trivial character vector $y$. For codes with a non-trivial character vector, the weights have more indirect but crucial implications for the transversal phase gates, as we saw in Theorem~\ref{thm:hu_css-t}.

Defined by Bravyi and Cross~\cite{bravyi_doubled_2015}, we refer to the following related class of codes as \textit{weakly divisible codes}:
\begin{definition}\label{def:gen_delta_div}
    A codespace $C \subseteq \mathbb{F}^n_2$ is weakly $\Delta$-divisible if there exists disjoint subsets $M^\pm\subseteq [n]$ such that 
    \begin{equation}
        |c\cap M^+|-|c\cap M^-|=0 \mod{\Delta} \quad \forall\,\, c\in C 
        \end{equation}
\end{definition}
Analogously, we define the quantum analogue of weakly divisible codes:
\begin{definition}\label{def:quantum_gen_delta_div}
   A $CSS(X,C_2;Z,C_1^\perp)$ code is called weakly $\Delta$-divisible if $C_2$ is weakly $\Delta$-divisible.
\end{definition}

We will denote weakly $4$-divisible and $8$-divisible codes as DE* and TE* codes, respectively.  Note that a weakly $\Delta$-divisible code with $M^+=[n], M^- = \emptyset$ is a $\Delta$-divisible code according to the standard notion of divisibility.

This generalization is especially useful for codes encoding a single logical qubit i.e., $[[n,1,d]]$ codes, which is the primary focus of our discussion. TE*(DE*) $[[n,1,d]]$ codespaces are preserved under well-defined transversal \T(\Sgate)-gates, according to the following Lemmas~\cite{bravyi_doubled_2015}:
\begin{lemma}\label{lemma:TE*gate}
    A TE* $CSS(X,C_2;Z,C_1^\perp)$ code encoding a single logical qubit and admitting a strongly transversal logical \X-gate applies the gate $\bar{\T}^m$ on the logical qubit upon the action of the transversal gate 
    \begin{equation}
        \T_{all}= \displaystyle {\prod_{i\in M^+}}\T_i \displaystyle{\prod_{j\in M^-}}\T_j^{-1} 
    \end{equation}
    where $m=|M^+|-|M^-| \mod{8}$.
\end{lemma}
\begin{lemma}
    A self-dual DE* $CSS(X,C_2;Z,C_1^\perp)$ code encoding a single logical qubit applies the gate $\bar{\Sgate}^m$ on the logical qubit upon the action of the transversal gate 
    \begin{equation}
        \Sgate_{all}= \displaystyle {\prod_{i\in M^+}}\Sgate_i \displaystyle{\prod_{j\in M^-}}\Sgate_j^{-1} 
    \end{equation}
    where $m=|M^+|-|M^-| \mod{4}$.
\end{lemma}
Thus, depending on the value of $m$, a TE* (DE*) code can implement the \T(\Sgate)-gate on each logical qubit via transversal physical gates, without any Clifford corrections. 

Every self-dual code admits the strongly transversal logical \Hgate-gate, and every CSS code admits the strongly transversal logical \CNOT-gate between code blocks. Hence, a self-dual DE* code with odd $m=|M^+|-|M^-|$ implements the full logical Clifford group transversally. 
In particular, an $[[n,1,d]]$ self-dual DE code with odd $n$ ($n = 1$ modulo 4) admits a (strongly) transversal implementation of the full logical Clifford group.

DE CSS codes are easily obtainable from classical \eczoohref[self-dual codes] {self_dual}~\cite{macwilliams_theory_1977,rains_self-dual_2002,kurz_divisible_2023} (i.e. $C_{\textrm{sd}}=C_{\textrm{sd}}^\perp$) DE codes via a general mapping from self-dual classical to self-dual CSS codes. Every classical DE code is necessarily self-dual, whereas a quantum DE code need not be self-dual, depending on the choice of the \Z-stabilizers. We note that in the classical coding literature, self-dual and DE codes are also commonly termed as self-dual codes of type-I and type-II, respectively.

For every self-dual $\Delta$-divisible code $C_{\textrm{sd}}$ with parameters $[n,n/2,d]$, there exists an $[[n-1,1,\geq d-1]]$ $\Delta$-divisible $CSS(X,C^\perp;Z,C^\perp)$ code. Note that both $n$ and $\Delta$ have to be even for $C_{\textrm{sd}}$ to be self-dual. The mapping can be summarized as (also see Figure~\ref{fig:selfdual-to-css}):

\begin{enumerate}
    \item Puncture the self-dual code $C_{\textrm{sd}}$ in the last position (i.e. remove the last column of the generator matrix) to obtain $C$.
    \item Use the code's dual space $C^\perp$ to define the $\texttt{X}$ and $\texttt{Z}$ type stabilizers of the CSS code, with the stabilizer generator matrix
    $$S=\begin{pmatrix}
    S_Z & 0\\
    0 & S_X
    \end{pmatrix}=\begin{pmatrix}
    C^\perp & 0\\
    0 & C^\perp
    \end{pmatrix}.$$
\end{enumerate}

This construction is similar to the mapping from triorthogonal matrices to triorthogonal codes~\cite{haah_towers_2018, bravyi_magic-state_2012} and is an example of a general stabilizer code conversion~\cite{ketkar_nonbinary_2005}. Utilizing it for the special case of DE self-dual classical codes, it is simple to show that the result is a DE CSS code encoding one logical qubit using the following Lemma.

\begin{lemma}\label{lemma:Cperp-seldual}
Let $C_{\textrm{sd}}$ be a $[n,n/2,d]$ $\Delta$-divisible self-dual code for some $\Delta$ divisible by $2$, $C$ be the $[n-1,n/2,\geq d-1]$ code obtained by puncturing the last position of $C_\textrm{sd}$, and $C^\perp$ be the $[n-1,n/2-1,d^\perp]$ code dual to $C$. Then, $C^\perp$ is a $\Delta$-divisible and weakly self-dual classical code i.e. $C^\perp \subseteq C$. 
\end{lemma}

\begin{proof}
    Consider the code $E_0(C^\perp)$, the $[n,n/2-1,d^\perp]$ code obtained by extending $C^\perp$ with a zero position as the last column. Then, 
    \begin{eqnarray}
        &(c^\perp,0).(c,x) = 0 \,\forall\,c^\perp \in C^\perp,c \in C \textrm{ and } x\in \{0,1\}\nonumber \\
        &\implies E_0(C^\perp)\subset C_\textrm{sd}^\perp=C_\textrm{sd}
    \end{eqnarray}
    Thus, $E_0(C^\perp)$ inherits the divisibility and self-orthogonality of $C_\textrm{sd}$. These properties are preserved under the removal of the $0$ column. Thus, we conclude that $C^\perp$ is also $\Delta$-divisible and weakly self-dual. 
\end{proof}
\begin{figure}[t]
  \centering
  \includegraphics[width=0.49\textwidth]{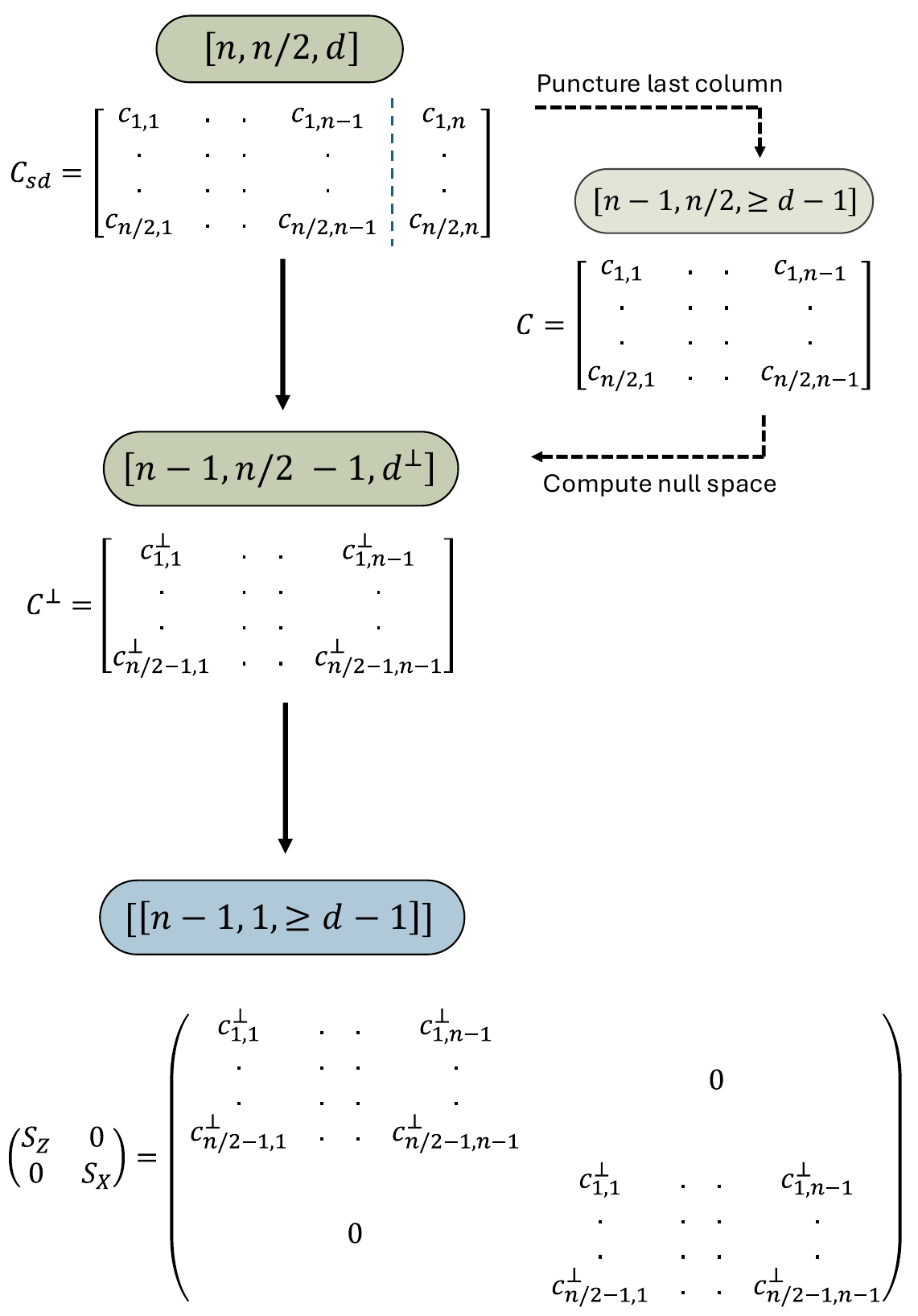}
  \caption{The self-dual code $C_{\textrm{sd}}$ is used to obtain $C^\perp$ which forms the $\texttt{X}$ and $\texttt{Z}$ stabilizers of the doubly even CSS code. The branch arising on the right details how to obtain $C^\perp$ from $C_{\textrm{sd}}$. }
  \label{fig:selfdual-to-css}
\end{figure}

The preceding lemma can be used to construct a valid quantum code using the procedure described before (Figure \ref{fig:selfdual-to-css}). Using a DE classical code $C_{\textrm{sd}}$ results in a DE weakly self-dual CSS code. Now, it is a property of every CSS code that its distance is at least as large as the minimum of the dual distances of the classical codes that define its $\texttt{X}$ and \Z-stabilizers~\cite{daniel_gottesman_surviving_2024}. So the distance of the constructed CSS code is at least as large as that of the punctured code $C$ with distance $d-1$. Hence, the constructed CSS code has distance $\geq d-1$. 

To define logical Pauli operators on the encoded qubit, consider the vector of all ones $\bar{1}=1^{\otimes n-1}$. Since $n-1$ is odd, and $C^\perp$ has all codewords of even weight, $\bar{1}$ is not a codeword in \(C^\perp\) and is orthogonal to all vectors in $C^\perp$. This implies that $\texttt{X}^{\otimes n-1}$ preserves the CSS codespace but is not a stabilizer element. Hence, it can be designated as the logical $\texttt{X}$-gate. Similarly, $\texttt{Z}^{\otimes n-1}$ is a logical operator which anticommutes with the logical $\texttt{X}$.
Hence, it can be designated as the logical $\texttt{Z}$-gate. 

\section{Doubling}\label{sec:doubling}
Doubling was first introduced by Betsumiya and Munemasa~\cite{betsumiya_triply_2012} as a map to construct triply even from doubly even classical codes. This map can be generalized~\cite{bravyi_doubled_2015,sullivan_code_2024}
to CSS codes encoding single logical qubits. 

Given a self-dual code $CSS(X,C_{2;sd};Z,C_{1;sd}^\perp)$ with parameters $[[n_{sd},1,d_{sd}]]$ (with odd $n_{sd}$) and a triorthogonal code $CSS(X,C_{2;tri};Z,C_{1;tri}^\perp)$ with parameters $[[n_{tri},1,d_{tri}]]$ which admits a strongly transversal logical \X-gate, doubling results in a triorthogonal code  $CSS(X,C_{2;dbl};Z,C_{1,dbl}^\perp)$ with parameters $[[2n_{sd}+n_{tri},1,\min(d_{sd},d_{tri}+2)]]$ where
\begin{equation}\label{eqn:dbled_logical}
\left[ \begin{array}{c}
   C_{1;dbl,(1)}  \\
   \midrule
   C_{2;dbl}  \\
\end{array}\right] = 
\left[ \begin{array}{ccc}
  C_{1;sd,(1)} & C_{1;sd,(1)} & C_{1;tri,(1)}  \\
   \midrule
  C_{2;sd} & C_{2;sd} & 0 \\
        0 & 0 & C_{2;tri} \\
        0 & \bar{1} & \bar{1} \\
\end{array}\right].
\end{equation}

Using this mapping on self-dual CSS codes obtained from some of the best known self-dual classical codes, we obtain the family of triorthogonal codes mentioned in Table~\ref{tab:triortho}. We now check that the resulting code is indeed triorthogonal with the claimed distance.

The weight of any non-trivial element-wise product of two rows can be evaluated as one of the following:
\begin{enumerate}
    \item $2|x\star y| = 0\mod{2}$ where $x\in C_{1;sd,(1)},y\in C_{2;sd}$.
    \item $2|x\star y| = 0 \mod{2}$ where $x,y\in C_{2;sd}$.
    \item $|x\star y| = 0 \mod{2}$ where $x \in C_{1;tri,(1)},y\in C_{2;tri}$ due to triorthogonality of $C_{1;tri}$.
    \item $|x\star y| = 0 \mod{2}$ where $x,y \in C_{2;tri}$ due to triorthogonality of $C_{1;tri}$.
    \item $|x| + |y| = 0 \mod{2}$ where $x \in C_{1;sd,(1)},y \in C_{1;tri,(1)}$ since $|x|=|y|=1\mod{2}$.
    
\end{enumerate}

 The weight of any non-trivial element-wise product of three rows can be evaluated as one of the following:
\begin{enumerate}
    \item $2|x\star y \star z|=0\mod{2}$ where $x\in C_{1;sd,(1)},y,z\in C_{2;sd}$.
    \item $2|x\star y \star z|=0\mod{2}$ where $x,y,z\in C_{2;sd}$.
    \item $|x\star y \star z|=0\mod{2}$ where $x\in C_{1;tri,(1)},y,z\in C_{2;tri}$ due to triorthogonality of $C_{1;tri}$.
    \item  $|x\star y \star z|=0\mod{2}$ where $x,y,z\in C_{2;tri}$ due to triorthogonality of $C_{1;tri}$.
    \item $|\bar{1}\star x \star y|= |x\star y| = 0\mod{2}$ where $x\in C_{1;sd,(1)},y\in C_{2;sd}$ since \Z-logicals (same as the \X-logicals) are orthogonal to the \X-stabilizers of the self-dual code.
    \item $|\bar{1}\star x \star y|= |x\star y| = 0\mod{2}$ where $x\in C_{1;tri,(1)},y\in C_{2;tri}$ due to triorthogonality of $C_{1;tri}$.
\end{enumerate}

Thus, the doubled code is indeed triorthogonal. 

We now compute the distance of the doubled code.
We can always choose $C_{1;dbl,(1)}= \begin{bmatrix}\bar{1}\end{bmatrix}$ because both the self-dual and the triorthogonal code admit strongly transversal logical \X-gates. Thus, for any $code \in \{sd,tri,dbl\}$, the code $C^{\perp}_{1;code}$ corresponding to \Z-stabilizers only contains even vectors, while the binary representation of any non-trivial \Z-logical gate (i.e. non-identity action on the codespace) has to be odd-weighted to have non-zero overlap ($\textrm{mod } 2$) with $\bar{1}$ (which represents the strongly transversal \X-logical gate).

Being a self-dual matrix, $C_{2;code}\subseteq C^{\perp}_{1;code}$, and the \Z-logical can always be chosen to be the same as the \X-logical. Thus, the distance of the CSS code is the minimum weight of any non-trivial \Z-logical.

Let the binary representation of the minimum weight non-trivial \Z-logical gates of the self-dual and the triorthogonal code be $y_{sd}$ and $y_{tri}$, respectively. It is clear that $|y_{sd}|=d_{sd}$ and $|y_{tri}|=d_{tri}$. We separate the columns of $C_{1;dbl}$ into three sectors: 
\begin{equation}
\begin{bNiceMatrix}[first-row,nullify-dots]
A & B & C \\ 
 \overbrace{C_{1;sd,(1)}} & \overbrace{C_{1;sd,(1)}} & \overbrace{C_{1;tri,(1)}} \\
 \vdots & \vdots & \vdots \\
\end{bNiceMatrix},
\end{equation}
defined by $A=\{i ; i\leq n_{sd}\}$, $B=n_{sd}+A$, and $C=2n_{sd}+\{i ; i\leq n_{tri}\}$  where $n+X=\{n+j\,|\,j\in X\}$.
Consider the vectors $\tilde{y}_{sd}=y_{sd}\oplus 0_B\oplus 0_C$ and $\tilde{y}_{tri}=(1_j + 0_A)\oplus(1_j + 0_B)\oplus y_{tri}$ for any $j\leq n_{sd}$. Clearly, both $\tilde{y}_{sd}$ and $\tilde{y}_{tri}$ are odd-weight vectors orthogonal to $C_{2;dbl}$. Thus, they both correspond to non-trivial \Z-logical gates, providing the lower bound $d_{dbl}\leq \min\{d_{sd},d_{tri}+2\}$.

To prove the upper bound, we observe that the binary representation $z=z_A\oplus z_B \oplus z_C$ of any \Z-logical satisfies
\begin{eqnarray}
    z_A + z_B &\in& C_{2;sd}^\perp, \label{eqn:cond1_dbl_Zlog}\\
    z_C &\in& C_{2;tri}^\perp,\label{eqn:cond2_dbl_Zlog} \\
    |z_B|+|z_C| &=& 0\mod{2}.\label{eqn:cond3_dbl_Zlog}
\end{eqnarray}
Since $z$ is odd-weighted and satisfies Eqn.~\ref{eqn:cond3_dbl_Zlog}, $z_A$ has to be odd-weighted. This leaves two cases:
\begin{enumerate}
    \item $|z_C|,|z_B|=0\mod{2}$: From Eqn.~\ref{eqn:cond1_dbl_Zlog}, $z_A+z_B$ is a non-trivial \Z-logical of the self-dual code. Hence, $|z|\geq |z_A|+|z_B|\geq |z_A+z_B|\geq d_{sd}$.
    \item $|z_C|,|z_B|=1\mod{2}$: From Eqn.~\ref{eqn:cond2_dbl_Zlog}, $z_C$ is a non-trivial \Z-logical gate of the triorthogonal code. Hence, $|z_C|\geq d_{tri}$ and since $z_A$ and $z_B$ are both odd-weighted, $|z|\geq d_{tri}+2$.
\end{enumerate}
This proves the necessary upper bound and hence, we conclude that $d_{dbl}=\min(d_{sd},d_{tri}+2)$.

Importantly, the same mapping can also take certain classes of DE* codes to TE* codes. We provide below a sufficient condition on the input codes in the doubling mapping to result in a TE* code.

\begin{theorem}\label{thm:dbling_te}
    The doubled code constructed from 
    \begin{enumerate}
        \item a self-dual DE* code with parameters $[[n_{de},1,d_{de}]]$ (with odd $n_{de}$) and $|M_{de}^+|-|M_{de}^-|=m \mod{8}$; and
        \item a TE* code with parameters $[[n_{te},1,d_{te}]]$ admitting strongly transversal logical \X-gate and $|M_{te}^+|-|M_{te}^-|=m \mod{8}$
    \end{enumerate}  
    is a TE* code with parameters $[[2n_{de}+n_{te},1,\min(d_{de},d_{te}+2)]]$ and $|M_{dbl}^+|-|M_{dbl}^-|=m \mod{8}$.
\end{theorem}
\begin{proof}
    For the resulting doubled code defined by $CSS(X,C_{2;dbl};Z,C_{1;dbl}^\perp)$, we need to construct $M_{te}^+$ and $M_{te}^-$ such that the code $C_{2;dbl}$ is weakly $8$-divisible (definition~\ref{def:gen_delta_div}). The generating matrix of the code 
    \begin{equation}
    C_{2;dbl}= \begin{bmatrix}
                    C_{2;de} &  C_{2;de} &  0 \\
                    0 & 0 &  C_{2;te} \\
                    0 & \bar{1} & \bar{1}
                \end{bmatrix}
    \end{equation}has $(2n_{de}+n_{te})$ columns. We choose
    \begin{eqnarray}
        M_{dbl}^+&=&M_{de}^+ \cup (n_{de}+M_{de}^+)\cup(2n_{de}+M_{te}^-) \\M_{dbl}^-&=&M_{de}^- \cup (n_{de}+M_{de}^-)\cup(2n_{de}+M_{te}^+)
    \end{eqnarray}
    where $n+M_{code}=\{n+j\,|\,j\in M_{code}\}$. The three naturally partitioned row sectors of the matrix $C_{2,dbl}$ can be seen to be TE* with this selection of $M_{dbl}^{\pm}$ as follows:
    \begin{itemize}
        \item In the first sector $[C_{2;de}\quad C_{2;de} \quad 0]$, each row $x = [x_{de}\quad x_{de} \quad \bar{0}]$ for some $x_{de}\in C_{2;de}$ satisfies 
        \begin{equation}
        \begin{split}
        |x \cap M_{dbl}^+|-|x \cap M_{dbl}^-|&=2|x_{de}\cap M_{de}^+|-2|x_{de}\cap M_{de}^-|\\
        &=0\mod{8}.
        \end{split}
        \end{equation}
        \item In the second sector $[0\quad 0 \quad C_{2;te}]$, each row $x = [\bar{0}\quad \bar{0} \quad x_{te}]$ for some $x_{te}\in C_{2,te}$ satisfies 
       \begin{equation}
       \begin{split}
        |x \cap M_{dbl}^+|-|x \cap M_{dbl}^-|&=|x_{te}\cap M_{te}^-|-|x_{te}\cap M_{de}^+|\\
        &=0\mod{8}.
        \end{split}
        \end{equation}        
        \item The last row $x=[0\quad \bar{1} \quad \bar{1}]$ satisfies
        \begin{equation}
        \begin{split}
        |x \cap M_{dbl}^+|-|x \cap M_{dbl}^-|
        &=\left(|M_{de}^+|-|M_{de}^-|\right)-\\
        &\phantom{{}=}\left(|M_{te}^+|-|M_{te}^-|\right)\\
        &=m-m = 0\mod{8}.
        \end{split}
        \end{equation}
        Hence, the resulting code $C_{2;dbl}$ is TE*. Finally, 
        \begin{equation}
        \begin{split}
            |M_{dbl}^+|-|M_{dbl}^-|&=2\left(|M_{de}^+|-|M_{de}^-|\right)-\\
            &\phantom{{}=}\left(|M_{te}^+|-|M_{te}^-|\right)\\
            &= 2(m)-(m) = m\mod{8}.
        \end{split}
        \end{equation}
    \end{itemize}
\end{proof}
The doubled color codes~\cite{bravyi_doubled_2015} are an example of this construction with $m=1 \mod{8}$. 

Every DE code is a DE* code with $|M_{de}^+|-|M_{de}^-|=|[n_{de}]|-|\emptyset|=n_{de}$. Hence, one can restrict the DE* code in Theorem~\ref{thm:dbling_te} to a DE code with $n_{de}=m\mod{8}$ to obtain a TE* code with the same properties.
This simplifies the condition of Theorem~\ref{thm:dbling_te} on the existence of some sets of indices to just the dimension of the DE code involved and will be useful in selecting appropriate DE codes to to be used for doubling, as we shall see in the next section.

\section{Quadratic Residue CSS codes}
\label{sec:qr_codes}
Classical Quadratic Residue (QR) codes (not to be confused with the graphical Quick-Response (QR) codes~\cite{tiwari_introduction_2016}) are \eczoohref[cyclic codes]{binary_cyclic} which exist for every prime length $n=\pm 1\mod{8}$. These are well studied due to their high distance, efficient decoding and finite rate~\cite{macwilliams_theory_1977}. 
When augmented by a parity bit, QR codes of length $n=-1\mod{8}$ yield self-dual DE codes of length $8j$ for some $j\in\mathbb{Z}$~\cite{ward_weight_1990} whose distances are bounded by the following theorem.
\begin{theorem}\label{thm:sd_distancebound}\cite{macwilliams_theory_1977, rains_self-dual_2002}
 (Upper bound) Any classical DE code of length $n$ has distance $d\leq 4\lfloor n/24 \rfloor +4$. (Lower bound) If the code is also an extended QR code, then $d^2-3d+4\geq n$.
\end{theorem}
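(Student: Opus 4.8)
The two inequalities have unrelated classical origins, so I would prove them separately.

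\emph{Upper bound.} The plan is to apply Gleason's theorem. A Type~II self-dual code has length divisible by $8$, and its Hamming weight enumerator $W_C(x,y)=\sum_i A_i x^{n-i}y^i$ is invariant under the order-$192$ group generated by the MacWilliams transform and the phase $\mathrm{diag}(1,i)$ coming from double-evenness; the associated invariant ring is the polynomial ring in the weight enumerators $g_8$ and $g_{24}$ of the $[8,4,4]$ Hamming and $[24,12,8]$ Golay codes, which have degrees $8$ and $24$ and are algebraically independent. Hence the homogeneous degree-$n$ invariants form a space of dimension $\lfloor n/24\rfloor+1$, spanned by the monomials $g_8^{(n-24b)/8}\,g_{24}^{\,b}$ for $0\le b\le\lfloor n/24\rfloor$, so $W_C$ is a combination of these with at most $\lfloor n/24\rfloor+1$ coefficients. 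Now $A_0=1$ and, since $C$ is doubly even, $A_i=0$ whenever $4\nmid i$; imposing additionally $A_4=A_8=\dots=A_{4\lfloor n/24\rfloor}=0$ gives a square linear system in the unknown coefficients. I would then invoke the Mallows--Sloane computation that this system is nonsingular and that its unique solution---the extremal weight enumerator---has a nonzero coefficient of $y^{4\lfloor n/24\rfloor+4}$. Consequently no Type~II code can have $A_i=0$ for all $0<i\le 4\lfloor n/24\rfloor+4$, i.e.\ $d\le 4\lfloor n/24\rfloor+4$.

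\emph{Lower bound.} Here I would prove the square-root bound for the unextended quadratic-residue code and then transfer it. Put $p=n-1$, so $p\equiv -1\pmod 8$ and in particular $p\equiv 3\pmod 4$; let $Q\subseteq R:=\mathbb F_2[x]/(x^p-1)$ be the QR ideal and $Q'$ the non-residue QR ideal. First record the standard structural facts: the generator polynomials of $Q$ and $Q'$ are coprime and multiply to $j(x):=1+x+\dots+x^{p-1}$, so $Q+Q'=R$, $Q\cap Q'=\{0,j\}$, and therefore $QQ'=\{0,j\}$; and because $-1$ is a quadratic non-residue when $p\equiv 3\pmod 4$, reciprocation $x\mapsto x^{-1}$ maps $Q$ onto $Q'$. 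Next, pin down the relevant minimum weight: the extended code $\widehat Q$ is Type~II, so $4\mid d$; since its automorphism group is transitive on coordinates (the extended QR code carries an action of $\mathrm{PSL}_2(p)$), some minimum-weight word of $\widehat Q$ has a $1$ in the extending position, and deleting that position yields an odd-like word of $Q$ of weight $d-1$, while conversely every odd-like word of $Q$ extends to a $\widehat Q$-word of weight one larger; hence the minimum odd-like weight of $Q$ is exactly $d_0=d-1$.

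Now take an odd-like $c\in Q$ of weight $d_0$ with support $A\subseteq\mathbb Z_p$, and set $\widetilde c(x)=c(x^{-1})\in Q'$, which again has weight $d_0$. Then $c(x)\widetilde c(x)\in QQ'=\{0,j\}$, and evaluating at $x=1$ gives $c(1)\widetilde c(1)=1\cdot 1=1=j(1)$, so $c(x)\widetilde c(x)=j(x)$: every coefficient equals $1$. But the coefficient of $x^k$ in $c(x)c(x^{-1})$ is $|A\cap(A+k)|\bmod 2$, so $|A\cap(A+k)|$ is odd---hence at least $1$---for every $k\in\mathbb Z_p$, while it equals $|A|=d_0$ at $k=0$. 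Summing over $k$,
\[
d_0^{\,2}=|A|^2=\sum_{k\in\mathbb Z_p}|A\cap(A+k)|\ \ge\ d_0+(p-1),
\]
so $d_0^{\,2}-d_0+1\ge p$. Substituting $d_0=d-1$ and $p=n-1$ and rearranging gives exactly $d^2-3d+4\ge n$. The main obstacle for the upper bound is that the substantive ingredients---Gleason's theorem and the nonsingularity/nonvanishing statement---are the ones I would cite rather than reprove. For the lower bound the delicate points are structural: establishing $QQ'=\{0,j\}$ together with the fact that reciprocation swaps the two QR ideals, and, most importantly, verifying that the applicable minimum weight of the unextended code is the odd-like one of weight $d-1$, since the counting step collapses for an even-like codeword (its evaluation at $x=1$ vanishes, so the product could be $0$ and carry no information).
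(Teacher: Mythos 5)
Your proposal is correct, and it follows essentially the same route as the paper, which does not prove Theorem~\ref{thm:sd_distancebound} itself but defers to the cited references: the upper bound there is exactly the Mallows--Sloane extremal-weight-enumerator argument via Gleason's theorem that you sketch, and the lower bound is the standard square-root bound for quadratic-residue codes with $p\equiv 3\pmod 4$, transferred to the extended code through the identification of the minimum odd-like weight with $d-1$ (using transitivity of $\mathrm{PSL}_2(p)$), with the algebra $d_0^2-d_0+1\geq p$, $d_0=d-1$, $p=n-1$ giving $d^2-3d+4\geq n$ as you state. The ingredients you leave as citations (Gleason's theorem and the nonsingularity/nonvanishing computation for the extremal enumerator) are precisely what the paper's cited sources supply, so nothing essential is missing relative to the paper's treatment.
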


A classical DE code is called \textit{extremal} if it saturates the above upper bound, and \textit{optimal} if it has the best known distance for its length.
Optimal codes might not saturate the distance upper bound, but every extremal code is optimal.

\begin{conjecture}
    CSS codes constructed from classically optimal DE codes subsume the smallest stabilizer codes with transversal Clifford group for their respective distances.
\end{conjecture}

Different extremal codes can also be compared in terms of their length, and
the first two members of the extended QR code family are extremal codes that achieve the smallest possible lengths for their respective distances. 
They yield the $[[7,1,3]]$ \eczoohref[Steane code]{steane} and the $[[23,1,7]]$ \eczoohref[Golay code]{qubit_golay} using our construction and have previously been used to construct the doubled color-code family and the $[[95,1,7]]$ TE code~\cite{bravyi_doubled_2015,sullivan_code_2024}. 
The next example of the shortest extremal code for its distance is the \eczoohref[distance-twelve extended QR code]{self_dual_48_24_12}, which corresponds to a $[[47,1,11]]$ DE CSS code. 

Other extremal classical codes that achieve the smallest possible lengths for their respective distances are not known, but many extended QR codes
are either extremal or 
close to it (see Table~\ref{tab:triply_even_from_QR}).
We use these to construct our DE code family in the second column of that table. Besides being the shortest codes with transversal Clifford gates for a given distance, the first few members of the corresponding DE quantum family were also noted to provide better depolarizing pseudo-threshold than the surface and Bacon-Shor codes~\cite{cross_comparative_2009}. 
Doubling these codes yields the most qubit-efficient weak triply even codes known for their respective distances.
They are listed in Table~\ref{tab:triply_even_from_QR}.

Infinite families of classical QR codes exist for arbitrary prime lengths $n=8j-1$ with $j\geq j_0$ for any $j_0$, as guaranteed by Dirichlet's theorem~\cite{apostol_introduction_1976}, and are straightforward to construct~\cite{macwilliams_theory_1977}. It then follows from the lower bound of Theorem~\ref{thm:sd_distancebound} and Lemma~\ref{lemma:Cperp-seldual} that DE quantum QR codes can be constructed for arbitrarily high distances. Making the conservative assumption that their distance is no more than one less  the distance of their classical counterpart (as is the case for all of the explicit codes in Table~\ref{tab:triply_even_from_QR}),
we can relate their distances to their length as follows, 
\begin{align}
   n&\leq d^{2}-d+1\\d&\leq4\lfloor(n+1)/24\rfloor+3.
\end{align}

We apply the doubling procedure recursively to the quantum version of these codes, with the $[[95,1,7]]$ code as the initial triply even code. All the doubled codes that result are TE* as a consequence of Theorem~\ref{thm:dbling_te} with $m=-1$ for each doubly even QR CSS code and the $[[95,1,7]]$ TE code. Since $m=-1$ for this TE* family, we get a transversal implementation of the logical \T-gate for each of its member.

\begin{figure}[t]
  \centering
  \includegraphics[width=0.5\textwidth]{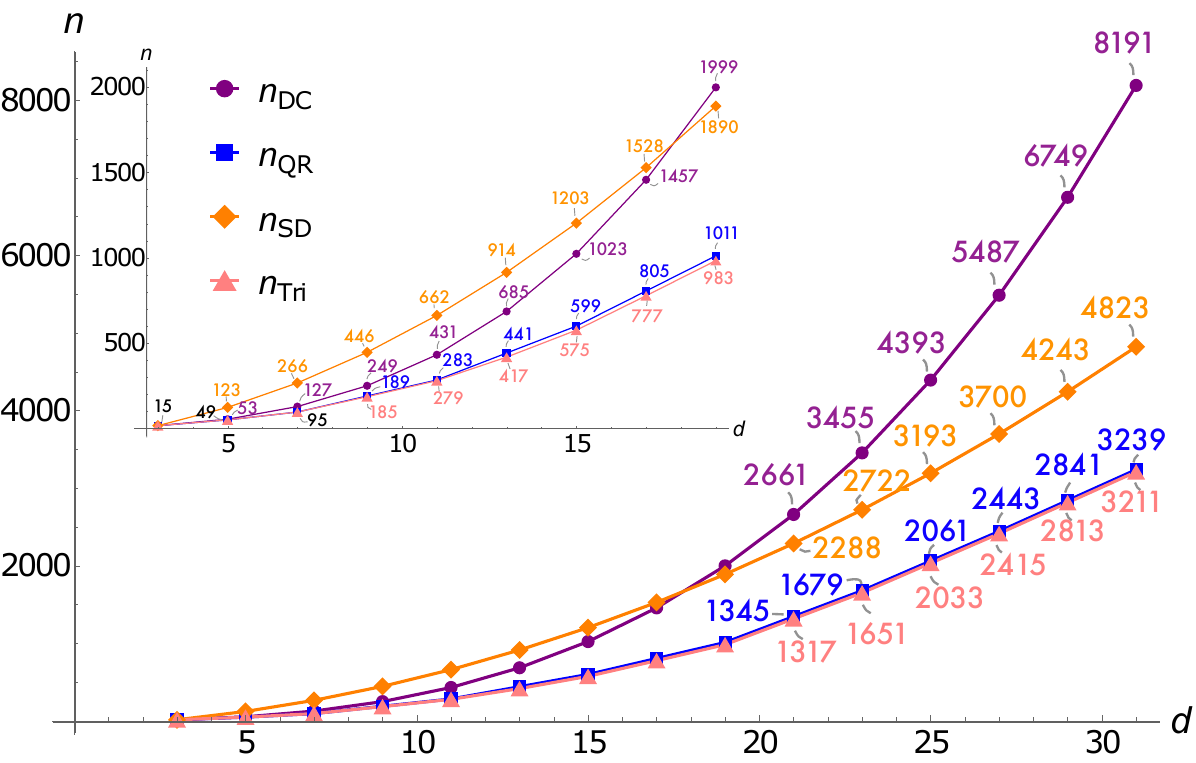}
  \caption{Comparison of physical qubit count, or length \(n\), required by TE* doubled color codes (marked ``DC'', purple), TE* codes stemming from extended QR codes (``QR'', blue) (also listed in Table~\ref{tab:triply_even_from_QR}), triorthogonal codes based on self-dual codes (``Tri'', pink) (also listed in Table~\ref{tab:triortho}) and triorthogonal codes based on a generic classical self-dual code family with guaranteed asymptotic existence (``SD'', orange)~\cite{rains_self-dual_2002}. This asymptotic family provides an upper bound for the family of triorthogonal codes and is expected to also upper bound the TE* code family.
  } 
  \label{fig:qubit-count-comparison}
\end{figure}

Infinite families of self-dual codes also exist whose distance scales linearly with the number of bits~\cite{rains_self-dual_2002, krasikov_linear_1997}. Applying the doubling procedure recursively to the quantum version of such a family produces triorthogonal codes with parameters $[[O(d^2),1,d]]$, whose length scaling with \(d\) matches that of the family alluded to in a previous work~\cite[Section VI]{haah_towers_2018}. 

We consider the best known examples of self-dual classical codes and double them to construct triorthogonal codes which have the least physical qubit overhead for their given distances. We list examples of this family in Table~\ref{tab:triortho} and plot their parameters in Figure~\ref{fig:qubit-count-comparison} alongside the constructed TE* codes, the asymptotic family of doubled self-dual codes and the color codes, whose qubit count grows as the third power of the distance.

We note here that infinite families of self-dual codes are guaranteed to exist with $d/n > \delta$ for each member where $\delta \sim 0.11$~\cite{rains_self-dual_2002,krasikov_linear_1997}. Both our triorthogonal and TE* families outperform the codes obtained by doubling the asymptotically guaranteed self-dual codes for the distances we could compute (Figure~\ref{fig:qubit-count-comparison}). This advantage will be retained at higher distances for the triorthogonal family due to the choice of self-dual codes optimized for the lowest qubit count.
It is also expected to be retained for the TE* family due to the additional structure present in the QR codes. Consequently, the scaling of this asymptotically guaranteed family places an upper bound on the physical qubit overhead of any triorthogonal code family and is expected to also upper bound the TE* family.

While we gain significant improvements in the physical qubit overheads, the stabilizer generators of our TE* and triorthogonal code families are geometrically non-local and their weights grow linearly with the distance of the code (equivalently, as square root of the code length). This is in contrast with the doubled color code family where most stabilizer generators are geometrically local and their weights can be made constant in the distance as a result of a weight reduction procedure~\cite{bravyi_doubled_2015}. The same weight reduction procedure can be employed for our codes but due to the underlying QR CSS codes requiring high weight stabilizers (as opposed to the 2D color codes with constant weight stabilizer generators), the weight reduced stabilizer generator set still admits some operators with weights growing linearly with the distance.

\section{Conclusion}\label{sec:conclusion}
    
Our weak triply even and triorthogonal families can be utilized as a component of magic state distillation protocols~\cite{bravyi_universal_2005,reichardt_quantum_2005,reichardt_error-detection-based_2009,reichardt_quantum_2009,campbell_roads_2017}. These codes are expected to perform better than the general self-dual code based family which, based on their parameters, exhibit a distillation cost scaling exponent $\gamma=\log_d(n/k)\rightarrow 2$ in the asymptotic limit. However, the exact scaling of their qubit overhead with distance (and thus, the value of $\gamma$) remains an open question. 

The constructed weak triply even codes are also compatible with the code conversion protocol of~\cite{sullivan_code_2024} which allows universal fault-tolerant computation without distillation. 

Extremal self-dual codes only allow CSS code distances of the form $d=4m+3$ for $m\in\mathbb{Z}$. It would be interesting to look for doubly even CSS codes with distances $4m+1$ that are shorter than quantum QR codes with distances $4m+3$ for the same $m$. Such codes, in conjunction with the QR codes, would immediately yield shorter weak triply even codes under the doubling map. 
The sole notable example used here is the $[[17,1,5]]$ \eczoohref[color code]{488_color}~\cite{bombin_topological_2006} based on the square-octagon lattice. It is a doubly even, distance five degenerate code which fills in the gap between distance three and seven quantum QR codes.
We leave filling the rest of the gaps as an exciting open question for future work.

Triorthogonal codes and CSS-T are well understood for the case of multiple logical qubits. However, all the mentioned results about transversal gates for the divisible classes only hold for the $k=1$ case. It is an interesting open question to characterize divisible codes and their transversal gates for case of $k>1$ to allow for constructions with potentially non-vanishing rates.

\section{Acknowledgements}
The authors thank Alexander Barg, Narayanan Rengaswamy,
Ben Brown,
Kenneth Brown,
Prakhar Gupta,
Vasanth Pidaparthy,
and Michael Vasmer
for insightful discussions. We especially thank Anqi Gong for their question on \href{https://scirate.com/}{SciRate} which motivated us to study the code families more deeply. We also thank the Simons Institute for their \href{https://simons.berkeley.edu/workshops/application-driven-coding-theory}{public repository of lectures}~\cite{simons_institute_for_the_theory_of_computing_application-driven_2024} which helped us understand the preliminaries for this work.
This work is supported in part by NSF grants OMA-2120757 (QLCI) and CIF-2330909. Part of this research was performed while SPJ was
visiting the Institute for Mathematical and Statistical
Innovation (IMSI), which is supported by the National
Science Foundation (Grant No. DMS-1929348).

\bibliographystyle{ieeetran}
\bibliography{ieeeabrv,references}

\end{document}